\documentclass[twoside, a4paper,10pt]{article}
\usepackage{amssymb}
\usepackage{IEEEtrantools}
\usepackage[mathscr]{eucal}
\usepackage[dvips]{graphicx}
\usepackage{amsmath}
\usepackage{amsthm}
\usepackage{pstricks}
\usepackage{caption}
\usepackage{esint}
\usepackage{cite}
\usepackage{cancel}
\usepackage{tikz,bm}
\usetikzlibrary{arrows}

\usetikzlibrary{arrows.meta}
\def \ni{\noindent}

\newcommand{\be}{\begin{equation}}
\newcommand{\ee}{\end{equation}}
\newcommand{\ben}{\begin{equation*}}
\newcommand{\een}{\end{equation*}}
\newcommand{\bes}{\begin{eqnarray}}
\newcommand{\ees}{\end{eqnarray}}
\newcommand{\besn}{\begin{IEEEeqnarray*}{rCl}}
\newcommand{\eesn}{\end{IEEEeqnarray*}}

\newcommand{\txt}{\textrm}

\newcommand{\tr}{\txt{Tr}}

\newtheorem*{theorem*}{Theorem}

\newtheorem*{definition*}{Definition}
\newtheorem*{lemma*}{Lemma}
\newtheorem*{prop*}{Proposition}
\newtheorem*{corollary*}{Corollary}

\DeclareFontFamily{U}{mathx}{}
\DeclareFontShape{U}{mathx}{m}{n}{<-> mathx10}{}
\DeclareSymbolFont{mathx}{U}{mathx}{m}{n}
\DeclareMathAccent{\widecheck}{0}{mathx}{"71}

\title{A Concentration of Measure Phenomenon in the Principal Chiral Model}
\author{T. Tlas}
\date{}

\begin{document}

\maketitle
\thispagestyle{empty}

\begin{abstract}
\ni We utilize the concentration of measure phenomenon to study the large $N$ limit of the $O(N)$ principal chiral model. The partition function in this limit is demonstrated to be that of a free massive theory.
\end{abstract}

\subsection*{Introduction}

The principal chiral model is widely considered to be a simpler version of the Yang-Mills theory which nonetheless has some of its nonperturbative features. The model has been solved \cite{polyakov, faddeev} around 40 years ago using methods which, alas, do not generalize to the more physically relevant case of Yang-Mills. The problem of re-deriving those classical results in an alternative fashion, beginning with the path integral, has remained open. The reason for doing so is not purely academic for it has recently been demonstrated that Yang-Mills theory can be considered to be a principal chiral model in loop space \cite{gaussian}. The aim of this manuscript is to take a step in this direction. Notably, we will obtain the partition function in the large $N$ limit, which turns out to be that of (an infinite number of copies of) a free, massive scalar field. We will also obtain the mass gap of this limiting free theory explicitly. \newline

 At this point we need to clarify a subtle issue regarding the various limits we will be taking. If we label the ultraviolet regulator by $\Lambda$, then the regularized partition function is $Z_{\Lambda, N}[J]$. What we shall do below is obtain the asymptotic form of this function as $N \to \infty$, holding the $\Lambda$ fixed, and take the continuum limit $\Lambda \to \infty$ afterwards. In other words, we will consider the limit
 
 \ben
 \lim_{\Lambda \to \infty} \lim_{N \to \infty} Z_{\Lambda, N} [J].
 \een

This is a similar approach to that in \cite{grosswitten, ym} and is also in line with the constructive quantum field theory work \cite{kopper}. One could also ask what is the result of interchanging the order of the two limits. There is evidence \cite{orland} that in this case the limiting theory is no longer free, even though it does still have a trivial S-matrix. Since a free theory is clearly a more convenient point to use as a starting point for approximating a theory with finite $N$, we shall not discuss this alternative order of limits further.
   \newline

Let us briefly outline the main ideas of the work below. The general approach could be thought of as a realization of the strategy sketched in \cite{polyakov2}. One starts by swapping the integral over the original field with that over the Lagrange multiplier. Contrary to what happens with the $O(N)$ vector model, one cannot simply apply Laplace's method to the resulting integral. The reason for this is well understood and lies in the fact that, in addition to the action, there is a comparable contribution coming from the entropy \cite{polyakov2, makeenko}, as the number of components of the Lagrange multiplier is large. It has been suggested in \cite{polyakov2} to diagonalize the Lagrange multiplier and then perform the integral over the diagonalizing matrices. Unfortunately, this integral is intractable. We bypass this problem below using the idea of concentration of measure; it shows that in the large $N$ limit, the effect of the entropy can be modeled by a Gaussian, similarly to how it was shown recently in lattice Yang-Mills \cite{ym}. Using this ansatz, we will be able to rewrite the partition function in a way amenable to the standard asymptotic analysis. Curiously, it turns out that, due to the specific details of the way concentration of measure manifests itself, the entropic fluctuations for the diagonalized field are suppressed in the continuum limit. This is different from the way the same phenomenon appears in lattice Yang-Mills \cite{ym} and will perhaps be the most technically involved aspect of the text below. This suppression easily allows us to find the mass gap of the principal chiral model and explain why it coincides with the result obtained by the, a priori incorrect, naive asymptotic analysis. \newline

The outline of the paper is as follows: in the next section we give the necessary background and describe our setting and conventions. In the subsequent section, we perform the preliminary analysis of the problem so that it's amenable to treatment using the concentration of measure approach. This treatment is started in the section that follows. The necessary calculations of the mean and the variance occupy the section after that. We finish with the asymptotic analysis obtaining the main results at the very end. \newline

\subsection*{Background and Notation}

Let us describe our setting. We will be studying the principal chiral model in the Euclidean setting. We will only deal with the $O(N)$ case, but essentially everything below can be straightforwardly generalized to other groups of interest. The theory is regularized by putting it on a toroidal lattice, with the derivatives replaced by the finite differences. The number of lattice sites will be denoted by $\Lambda^2$. The physical volume of the lattice will be denoted by $V$, and thus each step of the lattice has length $\Delta = \sqrt{\frac{V}{\Lambda^2}    }$. The elements of the lattice will usually be denoted by $x$ (or sometimes $y$). The elements of the (Fourier) dual lattice will be denoted by $p$ (or sometimes $q$). We take the step of the dual lattice to be $\frac{2 \pi}{\sqrt{V}}$. With this choice, the Fourier transform relations for any function $f(x) \leftrightarrow \widehat{f}(p)$ become

\besn
\widehat{f}(p) & = & \sum_x e^{- i px} f(x) \Delta^2 \rightsquigarrow \int e^{- i px} f(x) d^2x\\
f(x) & = & \frac{1}{(2 \pi)^2} \sum_p e^{i p x} \widehat{f}(p) \frac{ (2 \pi)^2}{V} \rightsquigarrow \frac{1}{(2 \pi)^2} \int e^{i px} \widehat{f}(p) d^2p,
\eesn

where the equations on the right are the corresponding continuum limit equations. Note that the continuum limit is achieved by taking $\Lambda \to \infty$, $V \to \infty$ and $\Delta = \frac{V}{\Lambda^2} \to 0$. Since we are interested eventually in the continuum limit of the theory, we will usually use the continuum limit form for most expressions below, as they are often more transparent. We will of course revert back to the explicit discrete notation when issues of regularization become relevant. Note that since $p$ ranges over a square\footnote{We shall take this square to be centered at $(0,0)$. There is no loss of generality in doing so due to periodicity of the Fourier transform.} of side length $ \frac{2 \pi \Lambda}{\sqrt{V}} \equiv \tilde {\Lambda}$, we shall consider our momentum ultraviolet cut-off to be $\tilde{\Lambda}$. \newline

Note that as a consequence of our conventions, if $\widehat{f}(p)$ is rotationally invariant\footnote{More precisely, if it is a restriction of a rotationally invariant function to the lattice.} as a function of $p$, we have the following important formula, which is used repeatedly below

\ben
\sum_p \widehat{f}(p) \rightsquigarrow \frac{V}{ (2 \pi)^2 } \int \widehat{f}(p) d^2p \sim \frac{V}{2 \pi} \int_0^{\tilde{\frac{\Lambda}{2}}} \widehat{f}(p) |p| d|p|   ,
\een

where $\sim$ stands for ``asymptotic to''. The justification for the right hand side is that we've replaced the integral over the square of side length $\tilde{\Lambda}$ with that over the inscribed disc of radius $\frac{\tilde{\Lambda}}{2}$.\footnote{Needless to say, for this to work, one should assume sufficient decay of $\widehat{f}(p)$. The functions we'll be dealing with will satisfy the necessary decay so we will not dwell on this point further.}  \newline

\subsection*{Preliminary Analysis}

The action of the principal chiral model model is

\ben
\frac{N}{2 \lambda} \int \partial_\mu \phi_{ab} \partial_\mu \phi_{ab},
\een

where $\lambda$ is the 't Hooft coupling and the fields $\phi$ satisfy the orthogonality constraint $\phi_{ba}  \phi_{bc} = \delta_{ac}$. As mentioned above, we're giving it in the continuum form. Of course, if we want to be fastidious, we should replace the integral above with a sum over $x$ weighted by a factor of $\Delta^2$, and the derivatives with finite differences. It is undoubtedly clear to the reader that the continuum expression is significantly less cluttered than the discrete one. We will continue to use the continuum expressions freely and will not belabor this issue any more.   \newline

The partition function $Z[J]$ of the theory is given by\footnote{This is of course meant to be regularized, and thus if we are awfully nitpicky, should be written as $Z_{\Lambda, N}[J]$. There will be no loss of clarity in simply ignoring the subscripts in what follows.  }

\ben
 \int \mathcal{D} \phi \mathcal{D} M \exp \bigg [ - \frac{N}{2 \lambda} \int  \partial_\mu \phi_{ab} \partial_\mu \phi_{ab} + i \int M_{ac} \big (\phi_{ba}  \phi_{bc} - \delta_{ac}   \big ) + i \sqrt{\frac{N}{\lambda}} \int J_{ab} \phi_{ab} \bigg ],
\een

where the factor $\sqrt{\frac{N}{\lambda}}$ is inserted in the source term in order to obtain a nontrivial large $N$ limit. Note that the Lagrange multiplier field satisfies $M_{ac} = M_{ca}$ since the orthogonality constraint is symmetric.   \newline

Rescaling the fields $\sqrt{\frac{N}{\lambda}   } \phi \to \phi$ and $\frac{2\lambda}{N} M \to M$, we see that $Z[J]$ is, up to an irrelevant overall constant, equal to

\besn
\int \mathcal{D} \phi \mathcal{D} M \exp \bigg [ - \frac{1}{2} \int  \partial_\mu \phi_{ab} \partial_\mu \phi_{ab} + \frac{i}{2} \int M_{ac} \big (\phi_{ba}  \phi_{bc} - \frac{N}{\lambda} \delta_{ac}   \big ) + i  \int J_{ab} \phi_{ab} \bigg ]  \\
= \int \mathcal{D} \phi \mathcal{D} M \exp \bigg [ - \frac{1}{2} \int \phi_{ab} K_{ab, a'b'} \phi_{a'b'}    + i \frac{N}{2\lambda} \int M_{aa} + i  \int J_{ab} \phi_{ab} \bigg ] ,
\eesn 

where 

\ben
K_{ab, a'b'} = - \partial^2 \delta_{aa'} \delta_{b b'} - i M_{aa'} \delta_{bb'} = (- \partial^2 \delta_{aa'}  - i M_{aa'} ) \otimes \delta_{bb'} = K_{aa'} \otimes \delta_{bb'}.
\een

We can now perform the integral over the $\phi$'s and get\footnote{Strictly speaking, the formula below is incorrect. This is because there is a factor of $\Delta^2$ in the quadratic term in the $\phi$'s (it comes from the integral) which filters down to a rescaling of the $K$. However, this only amounts to an additive constant in the logarithm term, which contributes an irrelevant overall constant to $Z[J]$ and thus, will be ignored below. Note that the same factor does not cause an issue in the $J$ term. This is because the factor of $\frac{1}{\Delta^2}$ coming from the $K^{-1}_{aa'}$ term cancels with $(\Delta^2)^2$ coming from the $J$'s, leaving a $\Delta^2$ hidden in the expression of the integral above.} 

\besn
Z[J] & = & \int \mathcal{D} M \exp \bigg [  - \frac{N}{2} \tr \Big ( \ln ( K_{aa'}   )  \Big ) - i \frac{N}{2 \lambda} \int M_{aa}   \\
& &   - \frac{1}{2} \int J_{ab} (K_{aa'} )^{-1} J_{a'b}  \bigg ],
\eesn

Now, using the fact that $M$ is a symmetric matrix, we change variables $M = O^t \hat{M} O$, where $\hat{M}$ is diagonal and $O$ is the diagonalizing orthogonal matrix. The Jacobian of this transformation is well-known \cite{mehta} and is equal to $\prod_{a \neq b} | \hat{M}_a - \hat{M}_b| $, where $\{ \hat{M}_a \}_{a = 1, N}$ is the set of eigenvalues of $\hat{M}$. We thus get that

\besn
Z[J] & = & \int \mathcal{D} \hat{M} \mathcal{D} O  \bigg (  \prod_{x, a \neq b} \ln | \hat{M}_a - \hat{M}_b|  \bigg ) \exp \bigg [ - \frac{N}{2} \tr \Big ( \ln ( K_{aa'}  )  \Big )  \\
& &  - i \frac{N}{2 \lambda} \int \sum_a \hat{M}_a - \frac{1}{2} \int J_{ab} (K_{aa'} )^{-1} J_{a'b} \bigg ] ,
\eesn

where the product over $x$ in the Jacobian goes over all the points of space. At this stage, motivated by the form of the propagator $K_{aa'}$, we perform a change of variables $\hat{M} \to \hat{M} + i \mu$ where $\mu > 0$.\footnote{This change of variables can truly be justified only a posteriori once we deduce the large $N$ limit and find that $\mu$ is in fact the mass$^2$ of the theory.} With this change, we get that $Z[J]$ is equal to

\besn
& & \int \mathcal{D} \hat{M} \mathcal{D} O  \bigg (  \prod_{x, a \neq b} \ln | \hat{M}_a - \hat{M}_b|  \bigg ) \exp \bigg [ - \frac{N}{2} \tr \Big ( \ln ( K_{aa'}  )  \Big )  \\
& &  - i \frac{N}{2 \lambda} \int \sum_a \hat{M}_a + \frac{N^2V \mu}{2 \lambda}  - \frac{1}{2} \int J_{ab} (K_{aa'} )^{-1} J_{a'b} \bigg ] ,
\eesn

where $V$ above is the volume of space.\newline

Now, as is customary \cite{coleman}, we introduce the probability density

\ben
\rho(\hat{M}) = \frac{1}{N} \sum_a \delta(\hat{M} - \hat{M}_a).
\een

This allows us to rewrite $Z[J]$ as 

\besn
& &\int \mathcal{D} \rho \mathcal{D} O  \exp \bigg [ N^2 \bigg ( \sum_x \int d\hat{M} d\hat{M}'  \rho(\hat{M}) \rho(\hat{M}') \ln | \hat{M} - \hat{M}' |   \\  
& & - \frac{1}{2N} \tr \Big ( \ln ( K_{aa'}  )  \Big ) + \frac{V \mu}{2 \lambda}   -  \frac{i}{2 \lambda} \int \int d \hat{M}  \rho(\hat{M}) \hat{M}  \bigg )- \frac{1}{2} \int J_{ab} (K_{aa'} )^{-1} J_{a'b} \bigg ] .
\eesn

Note that in the penultimate term, one of the integrals is over $\hat{M}$ while the other is over space. Also, observe that $\mathcal{D}\rho$, which stands for a functional integral over $\rho$, is constrained to go over probability measures. \newline

\subsection*{Concentration of Measure}

We would like to obtain the large $N$ asymptotic of $Z[J]$. To this end, we utilize the concentration of measure phenomenon \cite{ledoux} along the lines demonstrated in \cite{ym}. As a first step, we need the following simple

\begin{lemma*}
The function
\ben
O \to \exp \bigg [ - \frac{1}{2} \int J_{ab} (K_{aa'} )^{-1} J_{a'b} \bigg ] 
\een

is Lipschitz with respect to the metric $d(O, O') = \int || O - O'||_{HS}$ where $|| \cdot ||_{HS}$ stands for the Hilbert-Schmidt norm.
\end{lemma*}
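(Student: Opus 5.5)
The strategy is to view the function as a composition $O \mapsto K^{-1}(O) \mapsto F(O)=\int J K^{-1} J \mapsto e^{-F/2}$. We first show that $F$ is Lipschitz, and that $\mathrm{Re}\,F \ge 0$. The map $z\mapsto e^{-z/2}$ is $\tfrac12$-Lipschitz on $\{\mathrm{Re}\, z\ge 0\}$, so the composite is then Lipschitz. Throughout, the lattice is fixed ($\Lambda$ fixed), so all operators are finite matrices. We take the dependence on $O$ to be through $M = O^t(\hat M + i\mu)O$ pointwise in $x$, with $\hat M$ held fixed. Then
\[
K = -\partial^2 + \mu' - i O^t \hat{M} O ,
\]
where $\mu'$ collects the shifted $\mu$ term. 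The essential structure is that $-\partial^2+\mu$ is positive and $-iO^t\hat M O$ is anti-Hermitian.

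First we bound $K^{-1}$ uniformly in $O$. Because $O$ is orthogonal, $O^t\hat M O$ is real symmetric at each site, so $iO^t\hat MO$ is anti-Hermitian. For any vector $v$,
\[
\mathrm{Re}\,\langle v, Kv\rangle = \langle v,(-\partial^2+\mu)v\rangle \ge \mu\|v\|^2 .
\]
This gives $\|K^{-1}\|_{op}\le 1/\mu$, independently of $O$ and $\hat M$. The same computation gives $\mathrm{Re}\, F \ge 0$ for real $J$: writing $w=K^{-1}J$, we have $\langle J,K^{-1}J\rangle=\langle Kw,w\rangle$, whose real part is nonnegative.

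Next we apply the resolvent identity
\[
K^{-1}(O)-K^{-1}(O') = K^{-1}(O)\big(K(O')-K(O)\big)K^{-1}(O'),
\]
with $K(O')-K(O) = i\big(O^t\hat MO - O'^t\hat MO'\big)$, an operator that is local (diagonal) in $x$. At each site, write
\[
O^t\hat MO - O'^t\hat MO' = (O-O')^t\hat MO + O'^t\hat M(O-O').
\]
Its norm is at most $2\|\hat M(x)\|\,\|O(x)-O'(x)\|_{HS}$. Combining this with the uniform bound on $K^{-1}$,
\[
|F(O)-F(O')| \le \mu^{-2}\sum_x \Delta^2\,|\tilde J(x)|^2\, 2\|\hat M(x)\|\,\|O(x)-O'(x)\|_{HS}.
\]
Here $\tilde J$ bounds the relevant local norms of $K^{-1}J$. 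More carefully, we pair $\langle K^{-1*}J, \delta K\, K^{-1}J\rangle$ and use $\|K^{-1}J\|_\infty \le C(\Lambda)\|J\|/\mu$, which holds on a finite lattice. The result is bounded by $C\sup_x\|\hat M(x)\|\,\int\|O-O'\|_{HS} = C\, d(O,O')$.

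The main obstacle is this last step: the Lipschitz constant depends on $\sup_x|\hat M_a(x)|$, which is unbounded over the integration domain. We must also pass from an $\ell^2$ bound on $K^{-1}J$ to a pointwise one so that the sum-weighted metric $d$ appears. We will handle the pointwise step by finite dimensionality at fixed $\Lambda$, where all norms are equivalent. For the unbounded $\hat M$, we first try to state the lemma for fixed $\hat M$, since concentration is applied over $O$ only, conditionally on $\hat M$. Then the constant $C(\hat M,\Lambda,J,\mu)$ is finite, and the statement holds as a Lipschitz property on $O$ for each fixed $\hat M$. If a uniform constant is needed, we fall back to an alternative: use instead $\|K^{-1}\,\delta K\|$ bounds via $\|(\hat M-i\mu)^{-1}\|$ factors, i.e. write $\delta K$ against $K$ itself to absorb the large $\hat M$.
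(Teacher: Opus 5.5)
Your argument is correct and is essentially the paper's proof. Both use $\Re K \ge \mu$ to get $\Re F \ge 0$ and a uniform bound on $K^{-1}$, then apply the resolvent identity with the splitting $O^t\hat M O - O'^t\hat M O' = (O-O')^t\hat M O + O'^t\hat M(O-O')$, ending with a constant that depends on $\sup_x\|\hat M(x)\|$ and $J$; the paper's constant also depends on $\hat M$, so your fallback is unnecessary. Your version is slightly more careful than the paper's: you get the correct bound $\|K^{-1}\|\le 1/\mu$ (the paper writes $\mu$), and you handle the non-locality of $K^{-1}$ in $x$ explicitly via finite dimensionality at fixed $\Lambda$, which the paper glosses over by treating $K^{-1}$ pointwise.
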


\begin{proof}
Since $e^{-z}$ is Lipschitz as a function of $z$ if $\Re z$ is bounded from below, it is enough to show that $\int J_{ab} (K_{aa'} )^{-1} J_{a'b}$ is Lipschitz as a function of $O$, and that its real part is bounded from below. To this end, consider $K_{aa'} = (- \partial^2 + \mu ) \delta_{aa'} - i M_{aa'}$. Note that the real part of the its eigenvalues is bounded from below by $\mu$. It follows that the real part of the singular values of $K_{aa'}^{-1}$ is contained in $[0, \mu]$. From this, two facts follow at once: $\Re ( \int J_{ab} (K_{aa'} )^{-1} J_{a'b}) \geq 0$ and $|| K^{-1}_{aa'} || \leq \mu$ where $|| \cdot ||$ is the usual sup norm.\newline

We thus have that

\besn
& & |  \int J_{ab} (K_{aa'} (O) )^{-1} J_{a'b} - \int J_{ab} (K_{aa'} (O') )^{-1} J_{a'b} |   \\
 & \leq & \int | J_{ab} ( K_{aa'}^{-1}(O) - K_{aa'}^{-1} (O')    )   J_{a'b}    | \leq  \int || J ||^2_{HS} || K^{-1}(O) - K^{-1}(O') ||  \\
 &\leq  &  \int || J||^2_{HS} \, \,  \,  ||K^{-1}(O)|| \, \, \, || K^{-1}(O') || \, \,  \,  || K(O) - K(O')||  \\
  & \leq & \int  \frac{|| J ||^2_{HS} }{\mu^2} || O^t \hat{M} O - O'^t \hat{M} O' ||  \\
  & \leq & \int  \frac{|| J ||^2_{HS}}{\mu^2} \Big (  || O^t \hat{M} O - O^t \hat{M} O' ||  +  || O^t \hat{M} O' - O'^t \hat{M} O' ||        \Big )  \\
  & \leq & \int \frac{2 ||J||^2_{HS} || \hat{M}|| ^2 }{\mu^2} || O - O'|| \leq \frac{ 2 \max_x ( ||J||^2_{HS} || \hat{M}|| ^2   ) }{\mu^2} \int || O - O'||_{HS} ,
\eesn

which concludes the proof.
\end{proof}

Before we proceed, note that we are mainly interested in various derivatives of $Z[J]$ evaluated at 0. The modifications of the proof above needed to cover this case are utterly straightforward and will be left to the reader.\newline

We now use the fact that Lipschitz functions on $O(N)$ concentrate as $N \to \infty$.\footnote{See for example Theorem 5.17 in \cite{meckes}.} This means that such a function converges in probability to its mean, or more intuitively, that the function is essentially constant on its domain. This allows us to replace the term 

\ben
\exp \bigg [ - \frac{1}{2} \int J_{ab} (K_{aa'} )^{-1} J_{a'b} \bigg ] 
\een

in $Z[J]$ with 

\ben
\int \mathcal{D} O \exp \bigg [ - \frac{1}{2} \int J_{ab} (K_{aa'} )^{-1} J_{a'b} \bigg ] ,
\een

which in turn, by applying the same argument to the exponent which has been show above to be Lipschitz as well, can be replaced with

\ben
\exp \bigg [ - \frac{1}{2} \int \mathcal{D} O  \int J_{ab} (K_{aa'} )^{-1} J_{a'b} \bigg ] .
\een

Before we proceed and calculate this expression, let us note that if the reader is not entirely convinced by the somewhat abstract argument above, we shall sketch below a different, more direct, justification of this manipulation.\newline

In view of the discussion above, we see that in the large $N$ limit we get the following asymptotic relation

\bes
& & Z[J]  \sim \int \mathcal{D} \rho \Bigg ( \exp \bigg [  - \frac{1}{2} \int \mathcal{D} O \int J_{ab} (K_{aa'} )^{-1} J_{a'b} \bigg ] \Bigg )   \\
& &  \times \Bigg ( \exp \bigg [  N^2 \bigg ( \sum_x \int d\hat{M} d\hat{M}'  \rho(\hat{M}) \rho(\hat{M}') \ln | \hat{M} - \hat{M}' | + \frac{V \mu}{2 \lambda}     \nonumber \\
& &  - \frac{i}{2 \lambda} \int \int d \hat{M}  \rho(\hat{M}) \hat{M}  \bigg )   \bigg ] \Bigg )  \Bigg (     \int \mathcal{D} O  \exp \bigg [ N^2 \bigg ( - \frac{1}{2N} \tr \Big ( \ln ( K_{aa'}  )  \Big )  \bigg ) \bigg ] \Bigg ) . \nonumber
\label{eqnarray:asymp}
\ees

We would now like to handle the last brackets in the expression above. Note that in this case, contrary to what was done in the first term, it is not justified to simply replace the exponent with its average. The reason for this is the presence of the $N^2$ factor in the exponent. Thus, one needs a more refined method to handle this term. We shall proceed by a method similar to that used in \cite{ym} to deal with lattice Yange-Mills. Notably, we would like to pushforward the $\mathcal{D}O$ measure with the function 

\ben
 O \to t(O) = \frac{1}{2N V} \tr \Big (  \ln (K_{aa'})  \Big ),
\een

where $V$ is the volume of space. However, before we do this, we rotate the contour of integration over the $\hat{M}$, or equivalently, that over the $\rho$, so that it is along the imaginary axis. This has the effect of replacing $\hat{M}$ everywhere with $i \hat{M}$, which would guarantee that the integrand in $Z[J]$ above is real. The reason for this contour change is that the reality of the resulting expressions would permit us to use the simple Laplace's method to obtain the asymptotics instead of the saddle point method. \newline

Before we proceed, a couple of quick remarks: first we've done this rotation at this stage and not on the original expression since otherwise, we would not be able to guarantee Lipschitzness of $K_{aa'}$ in the Lemma above. Therefore, we've only rotated the contour after the $e^{- \int J K^{-1} J}$ term is safely taken outside of the $\mathcal{O}$ integral. Second, note that the fact that the contour of integration can be rotated with impunity needs to be justified as the function integrated is not analytic (the culprit being the $|M_a - M_b| = e^{\ln |M_a - M_b|}$ term). It is however a very simple argument to show that this rotation is valid, for the integral we have is a multi-variable generalization of $\int_C f(z) |z| dz$, where $C$ is along the real axis and $f$ is assumed to be analytic and vanishing sufficiently fast at infinity. We can now decompose $C$ as $C_+ + C_-$ where $C_+/C_-$ is the positive/negative part of the real axis. We thus have that 

\besn
\int_C f(z) |z| dz & = & \int_{C_+} f(z) z dz - \int _{C_-} f(z) z dz \\
& = & i^2 \int_{ C_+} f(it) t dt - i^2 \int_{ C_-} f(it) t dt \\
& = & i^2 \int_{- \infty}^\infty f(it) |t| dt.
\eesn

Note that the second equality is obtained by rotating the contours $C_+/C_-$ counter-clockwise by $\frac{\pi}{2}$. This is justified as the two integrands are analytic in the first and third quadrants. \newline

So, having rotated the contour, let us consider the pushforward of $\mathcal{D}O$. Denote the pushforward measure by $d \nu_N[t]$. What can we say about this measure? If we make the reasonable assumption that this measure is asymptotic to $e^{- N^2 f(t) } dt$, with $f(t)$ being sufficiently regular, then in fact, using Laplace's method, we can take $f(t) = \frac{A}{2}(t- t_0)^2$. Moreover, this form of the measure dovetails nicely with the concentration phenomenon. Finally, the coefficients $A, t_0$, and the fact that the overall factor in front of $f$ is $N^2$ (as opposed to some other sequence going to infinity) can be read off from the large $N$ asymptotics of the first two moments of $d\nu_N[t] dt$.\footnote{Note that one could try to obtain further justification of the asymptotic form above by attempting to compute \textit{all} the moments of $d \nu_N[t] dt$ along the lines that were done in \cite{ym}. However, in this case it is a significantly more involved combinatorial problem and will be left to possible future work.}\newline

We are thus faced with computing the asymptotics of the mean and the variance of $d\nu_N[t] dt$. Since we will be computing various integrals over $O(N)$, let us adapt the graphical notation used in the literature \cite{creutz} to perform such integrals. The Kronecker's delta, $\delta_{aa'}$ will be denoted by an undecorated line, not necessarily straight. Note that it is irrelevant which index is attached to which end of the line. An entry of an element of $O(N)$, $O_{aa'}$ will be denoted by a vertical line decorated by a disk. The upper end of the line represents the index $a$ while the lower represents the index $a'$. Other matrices, e.g. $\hat{M}_{aa'}$ will be denoted by other (non-circular) shapes with two `legs', the left one representing the index $a$ and the right one representing the index $a'$. Connecting two shapes represents identifying the two relevant indices and summing, that is, a product of the relevant matrices. Figure \ref{fig 1} shows a summary of the notation above. \newline

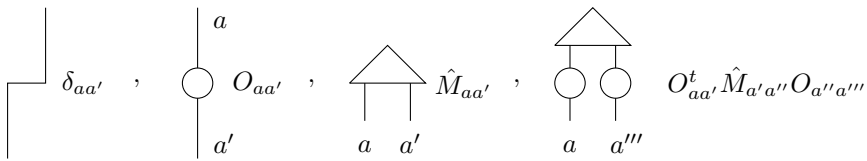
\begin{figure}[h]
\begin{tikzpicture}

\draw (1.5,0) -- (1.5, 1) -- (2, 1) -- (2, 2);
\node at (2.5, 1){$\delta_{aa'}$};

\node at (3.2, 1) {$,$};

\draw (4,0) -- (4, 2);
\draw [fill = white] (4,1) circle [radius=0.2];
\node at (4.3,1.8) {$a$};
\node at (4.35, 0.2) {$a'$};
\node at (4.8, 1) {$O_{aa'}$};

\node at (5.5, 1) {$,$};

\draw (6, 1) -- (6.5, 1.5) -- (7, 1) -- (6,1) ;
\draw (6.2, 1) -- (6.2, 0.5);
\draw (6.8, 1) -- (6.8, 0.5);
\node at (6.2, 0.16) {$a$};
\node at (6.8, 0.2) {$a'$};
\node at (7.5, 1) {$\hat{M}_{aa'}$};

\node at (8.2, 1) {$,$};

\draw (8.7, 1.5) -- (9.2, 2) -- (9.7, 1.5) -- (8.7,1.5 );
\draw (8.9, 1.5) -- (8.9, 0.5);
\draw (9.5, 1.5) -- (9.5, 0.5);
\draw [fill = white] (8.9,1) circle [radius=0.2];
\draw [fill = white] (9.5,1) circle [radius=0.2];
\node at (11.5, 1){$O^t_{aa'} \hat{M}_{a'a''} O_{a''a'''}$};
\node at (8.9, 0.16) {$a$};
\node at (9.65, 0.2) {$a'''$};

\end{tikzpicture}
\caption{\textit{Summary of the graphical notation used below. Note that we've only chosen a particular representation of the Kronecker's delta. We could have drawn any other line. Also note that the last diagram is indeed equal to the displayed mathematical expression since the transpose interchanges the indices.}}
\label{fig 1}
\end{figure}

We need to compute integrals of polynomials over $O(N)$. This is a mature subject (see \cite{weingarten1} for a bird's eye overview) where there are explicit formulas for the expression below. However, since our interest is in the large $N$ limit, we only need the following simple formula (first obtained in \cite{weingarten2}), giving the dominant asymptotic of a product of an even number of $O$'s: 

\ben
\int dO \, \, O_{a_1 b_1} \dots O_{a_{2k} b_{2k}} = \frac{1}{N^k} \sum \delta_{a_{\alpha_1} a_{\beta_1}} \delta_{b_{\alpha_1} b_{\beta_1}} \dots \delta_{a_{\alpha_k} a_{\beta_k}} \delta_{b_{\alpha_k} b_{\beta_k}} + o( \frac{1}{N^k}), 
\een

where the sum goes over all possible splittings of the collection $\{1, 2, \dots, 2k\}$ into pairs $\{ (\alpha_1, \beta_1), (\alpha_2, \beta_2), \dots, (\alpha_k, \beta_k) \}$. Figure \ref{fig2} gives a graphical representation of the formula above for the $k=1$ and $k=2$ cases.

\begin{figure}[h]
\begin{tikzpicture}

\node at (1,4) {$\int dO$};

\draw (1.7,3) -- (1.7, 5);
\draw [fill = white] (1.7,4) circle [radius=0.2];
\draw (2.2,3) -- (2.2, 5);
\draw [fill = white] (2.2,4) circle [radius=0.2];

\node at (2.9,4) {$= \frac{1}{N}$};

\draw (3.4,3) -- (3.4, 3.9) -- (3.9, 3.9) -- (3.9, 3);
\draw (3.4, 5) -- (3.4, 4.1) -- (3.9, 4.1) -- (3.9, 5);

\node at (1,1) {$\int dO$};

\draw (1.7,0) -- (1.7, 2);
\draw [fill = white] (1.7,1) circle [radius=0.2];
\draw (2.2,0) -- (2.2, 2);
\draw [fill = white] (2.2,1) circle [radius=0.2];
\draw (2.7,0) -- (2.7, 2);
\draw [fill = white] (2.7,1) circle [radius=0.2];
\draw (3.2,0) -- (3.2, 2);
\draw [fill = white] (3.2,1) circle [radius=0.2];

\node at (4, 1) {$ = \frac{1}{N^2} $};

\node at (4.6, 1) {$\Bigg\{$};

\draw (5,0) -- (5,0.9) -- (6.5, 0.9) -- (6.5, 0);
\draw (5,2) -- (5, 1.1) -- (6.5, 1.1) -- (6.5, 2);
\draw (5.5, 0) -- (5.5, 0.7) -- (6, 0.7) -- (6, 0);
\draw (5.5, 2) -- (5.5, 1.3) -- (6, 1.3) -- (6, 2);

\node at (6.8, 1) {$+$};

\draw (7.1, 0) -- (7.1, 0.9) -- (7.6, 0.9) -- (7.6, 0);
\draw (7.1, 2) -- (7.1, 1.1) -- (7.6, 1.1) -- (7.6, 2);
\draw (8.1, 0) -- (8.1, 0.9) -- (8.6, 0.9) -- (8.6, 0);
\draw (8.1, 2) -- (8.1, 1.1) -- (8.6, 1.1) -- (8.6, 2);

\node at (8.9, 1) { $ + $};

\draw (9.2, 0) -- (9.2, 0.9) -- (10.2, 0.9) -- (10.2, 0);
\draw (9.2, 2) -- (9.2, 1.1) -- (10.2, 1.1) -- (10.2, 2);
\draw (9.7, 0) -- (9.7, 0.7) --  (10, 0.7) ;
\draw (10.4, 0.7) -- (10.6, 0.7) -- (10.6, 0);
\draw (9.7, 2) -- (9.7, 1.3) -- (10, 1.3);
\draw (10.4, 1.3) -- (10.6, 1.3) -- (10.6, 2);

\node at (11, 1) {$\Bigg\}$};

\node at (11.9, 1) {$ + \,  o(\frac{1}{N^2})$};

\end{tikzpicture}
\caption{\textit{The graphical representation of the cases $k =1$ and $k=2$ of the asymptotic integration formula above. Note that in the $k=1$ case, the dominant asymptotic is the exact answer.}}
\label{fig2}
\end{figure}

\subsection*{The Mean and the Variance}

We are now ready to compute the asymptotic mean and variance of the measure $d\nu_N[t]$. Let us begin with the mean $t_0$:

\besn
t_0 & = & \int t d\nu_N[t] = \int \mathcal{D} O t(O) =  \frac{1}{2N V} \int \mathcal{D} O \, \tr \Big (  \ln (K_{aa'}   \Big ) \\
& = & \frac{1}{2NV} \int \mathcal{D} O \, \tr \Big ( \ln ( (- \partial^2 + \mu) \delta_{aa'} + O^t \hat{M}_{aa'} O    )   \Big ) \\
& = & \frac{1}{2V} \tr \ln ( - \partial^2 + \mu) + \frac{1}{2NV} \int \mathcal{D} O \tr \ln \Big ( \delta_{aa'} + (- \partial^2 + \mu)^{-1} O^t \hat{M}_{aa'}O)   \Big ).
\eesn

Now, before proceeding any further, we shall make the simplifying ansatz of taking the field $M_{aa'}$ to be independent of space. The justification for this is that we anticipate, similarly to what happens in the large $N$ vector model, that the configuration of this field that gives the dominant asymptotic of the partition function is translation invariant. \newline

Keeping this simplification in mind, we get

\besn
\frac{1}{2NV} \int \mathcal{D} O \tr \ln \Big ( \delta_{aa'} + (- \partial^2 + \mu)^{-1} O^t \hat{M}O   \Big ) & = & \\
\frac{1}{2 N V} \sum_{n=1}^\infty \int \mathcal{D} O \tr \frac{(-1)^n}{n}  \Big ( (- \partial^2 + \mu)^{-1} O^t \hat{M}O   \Big )^n.
\eesn

Focusing now on a the particular term in the sum over $n$ and reverting to the discrete notation for clarity, we get that

\besn
\frac{(-1)^n}{2nNV}\int \mathcal{D} O \tr   \Big ( (- \partial^2 + \mu)^{-1} O^t \hat{M}O   \Big )^n = \\
 \frac{(-1)^n}{2nNV} \sum_{x_1, \dots, x_n} \sum_{a_1, b_1, c_1, \dots, a_n, b_n, c_n}  \int \mathcal{D}O (- \partial^2 + \mu)^{-1}_{x_1, x_2} O^t_{a_1 b_1}(x_2) \hat{M}_{b_1 c_1} O_{c_1 a_2} (x_2) \dots    \\
\dots (- \partial^2 + \mu)^{-1}_{x_n, x_1} O^t_{a_n b_n} (x_1) \hat{M}_{b_n c_n} O_{c_n a_1} (x_1) .
\eesn

The sum over the $x$'s can be split into the following classes: the class where all the $x$'s are different, the class where all are different except two, the class where all different except three and so on. We shall consider only the first two classes since, as will be apparent below, the rest are suppressed.\newline

Let us now perform the $\mathcal{D} O$ integration in these two cases. In the case where all the $x$'s are different, we get $n$ repetitions of the following graphical formula

\begin{figure}[h]
\begin{tikzpicture}

\node at (3, 1) {$\int d O$};

\draw (4, 1.5) -- (4.5, 2) -- (5, 1.5) -- (4,1.5 );
\draw (4.2, 1.5) -- (4.2, 0.5) ;
\draw [dashed] (4.2, 0.5) -- (3.8, 0.5);
\draw (4.8, 1.5) -- (4.8, 0.5);
\draw [dashed] (4.8, 0.5) -- (5.2, 0.5);
\draw [fill = white] (4.2,1) circle [radius=0.2];
\draw [fill = white] (4.8,1) circle [radius=0.2];

\node at (6, 1) {$ =  \, \frac{1}{N}$};

\draw (7, 0.5) -- (7, 0.9) -- (7.5, 0.9) -- (7.5, 0.5);
\draw (7, 1.5) -- (7, 1.1) -- (7.5, 1.1) -- (7.5, 1.5);
\draw [dashed] (7, 0.5) -- (6.6, 0.5);
\draw [dashed] (7.5, 0.5) -- (7.9, 0.5);
\draw (6.75, 1.5) -- (7.25, 2) -- (7.75, 1.5) -- (6.75, 1.5);

\end{tikzpicture}

\caption{\textit{ The dashed lines represent indices which are contracted with other terms. The triangle stands for the matrix $\hat{M}_{aa'}$.   }}
\label{single}

\end{figure}

Performing all the integrals, we will be left with the following expression

\begin{figure}[h]
\begin{tikzpicture}[scale = 0.97]

\node at (1.3,1.5) {$\Bigg ( \, \frac{1}{N} $};
\draw (2, 1.5) -- (2, 1.1) -- (2.5, 1.1) -- (2.5, 1.5);
\draw (1.75, 1.5) -- (2.25, 2) -- (2.75, 1.5) -- (1.75, 1.5);
\node at (3, 1.5) {$\Bigg )^n$};
\draw (3.2,1.1) -- (3.2, 1.8) -- (3.9,1.8) -- (3.9,1.1) -- (3.2,1.1);

\node at (8.8, 1.5) {$ =  \Big ( \frac{1}{N} \sum_a \hat{M}_{a} \Big )^n \Big ( \sum_b \delta_{bb} \Big )  = N \Big ( \int d \hat{M} \rho(\hat{M}) \hat{M} \Big )^n \equiv N \Big ( \overline{\hat{M}}  \Big )^n$  };

\end{tikzpicture}

\caption{\textit{ The result of performing the $O$ integral in the case of non-coincident points. The bar denotes averaging with respect to the $\rho$ measure. }}

\end{figure}

In the case when all the $x$'s are different except two, the $O$ integrations over those $O$'s located at the non-coincident points proceeds as above. The graphical calculation of integration over the remaining $O$'s is shown in figure \ref{coincident}. \newline

\begin{figure}[h!]
\begin{tikzpicture}[scale = 1]

\node at (0.3,8.5) {$\Bigg ( \, \frac{1}{N} $};
\draw (1, 8.5) -- (1, 8.1) -- (1.5, 8.1) -- (1.5, 8.5);
\draw (0.75, 8.5) -- (1.25, 9) -- (1.75, 8.5) -- (0.75, 8.5);
\node at (2.2, 8.5) {$\Bigg )^{n-2}$};

\node at (3, 8.5) {$\int d O$};

\draw (3.5, 8.7) -- (4, 9.2) -- (4.5, 8.7) -- (3.5,8.7);
\draw (3.7, 8.7) -- (3.7, 7.4) ;
\draw (4.3, 8.7) -- (4.3, 7.7);
\draw (4.3, 7.7) -- (5, 7.7);
\draw [fill = white] (3.7,8.2) circle [radius=0.2];
\draw [fill = white] (4.3,8.2) circle [radius=0.2];

\draw (4.8, 8.7) -- (5.3, 9.2) -- (5.8, 8.7) -- (4.8,8.7 );
\draw (5, 8.7) -- (5, 7.7) ;
\draw (5.6, 8.7) -- (5.6, 7.4);
\draw [fill = white] (5,8.2) circle [radius=0.2];
\draw [fill = white] (5.6,8.2) circle [radius=0.2];
\draw (5.6, 7.4) -- (3.7, 7.4);

\node at (6.2, 8.5) {$=$};

\node at (0.3,5.5) {$\Bigg ( \, \frac{1}{N} $};
\draw (1, 5.5) -- (1, 5.1) -- (1.5, 5.1) -- (1.5, 5.5);
\draw (0.75, 5.5) -- (1.25, 6) -- (1.75, 5.5) -- (0.75, 5.5);
\node at (2.2, 5.5) {$\Bigg )^{n-2}$};

\node at (3, 5.5) {$\int d O$};

\draw (3.5, 5.7) -- (4, 6.2) -- (4.5, 5.7) -- (3.5,5.7);
\draw (3.7, 5.7) -- (3.7, 4.7) ;
\draw (4.3, 5.7) -- (4.3, 5.3);
\draw (4.3, 5.3) -- (5, 5.3);
\draw [fill = white] (3.7,5.2) circle [radius=0.2];

\draw (4.8, 5.7) -- (5.3, 6.2) -- (5.8, 5.7) -- (4.8,5.7 );
\draw (5, 5.7) -- (5, 5.3) ;
\draw (5.6, 5.7) -- (5.6, 4.7);
\draw [fill = white] (5.6,5.2) circle [radius=0.2];
\draw (5.6, 4.7) -- (3.7, 4.7);

\node at (6.2, 5.5) {$= $};

\node at (6.9,5.5) {$\Bigg ( \, \frac{1}{N} $};
\draw (7.6, 5.5) -- (7.6, 5.1) -- (8.1, 5.1) -- (8.1, 5.5);
\draw (7.35, 5.5) -- (7.85, 6) -- (8.35, 5.5) -- (7.35, 5.5);
\node at (8.8, 5.5) {$\Bigg )^{n-2}$};

\draw (9.2, 5.7) -- (9.7, 6.2) -- (10.2, 5.7) -- (9.2,5.7);
\draw (9.4, 5.7) -- (9.4, 4.7) ;
\draw (10, 5.7) -- (10, 5.3);
\draw (10, 5.3) -- (10.7, 5.3);

\draw (10.5, 5.7) -- (11, 6.2) -- (11.5, 5.7) -- (10.5,5.7 );
\draw (10.7, 5.7) -- (10.7, 5.3) ;
\draw (11.3, 5.7) -- (11.3, 4.7);

\draw (11.3, 4.7) -- (9.4, 4.7);

\node at (11.7 , 5.5) {$=$};

\node at (  6.1 , 3   ) {$ \Big ( \frac{1}{N} \sum_a \hat{M}_{a} \Big )^{n-2}     \Big ( \sum_b \hat{M}_b^2  \Big )   = $ };
\node at ( 6.1, 1.5) { $N \Big ( \int d \hat{M} \rho(\hat{M}) \hat{M} \Big )^{n-2} \Big ( \int d \hat{M} \rho(\hat{M}) \hat{M}^2  \Big )  \equiv  N  \Big ( \overline{\hat{M}}  \Big )^{n-2} \Big ( \overline{\hat{M}^2} \Big )$};

\end{tikzpicture}

\caption{\textit{ The result of doing the $O$ integrals when exactly two $x$'s are coincident. As above, the bar denotes averaging with respect to the $\rho$ measure. The starting expression is that obtained after integrating over the $O$'s at non-coincident points. The first and second equalities above follow from the fact that $O^t O = I$ and that the $dO$ is a probability measure.   }}
\label{coincident}
\end{figure}

In order to proceed, we use the following, easily verified formula

 \ben
 (- \partial^2 + \mu)^{-1}_{x,y} = \frac{1}{\Lambda^2} \sum_p  \frac{e^{i p (x-y)}}{p^2 + \mu}.
 \een

Putting it all together, we have that 

\besn
\frac{(-1)^n}{2nNV} \sum_{x_1, \dots, x_n} \sum_{a_1, b_1, c_1, \dots, a_n, b_n, c_n}  \int \mathcal{D}O (- \partial^2 + \mu)_{x_1, x_2} O^t_{a_1 b_1}(x_2) \hat{M}_{b_1 c_1} O_{c_1 a_2} (x_2) \dots   \\
\dots (- \partial^2 + \mu)_{x_n, x_1} O^t_{a_n b_n} (x_1) \hat{M}_{b_n c_n} O_{c_n a_1} (x_1) \\
= \frac{(-1)^n \Big (\overline{\hat{M}}\Big )^n }{2nV}  \Bigg (    \sum_{x_1, \dots, x_n} (- \partial^2 + \mu)^{-1}_{x_1, x_2} \dots (- \partial^2 + \mu)^{-1}_{x_n, x_1}   \\
  +  {n \choose 2}\bigg ( \frac{\overline{\hat{M}^2}}{\overline{\hat{M}}^2}  -1  \bigg )  \sum_{x_1, \dots, x_{n-1}}  (- \partial^2 + \mu)^{-1}_{x_1, x_2} \dots (- \partial^2 + \mu)^{-1}_{x_{n-1}, x_1} + \dots \\
= \frac{1}{(2 \pi)^2} \Bigg [  \frac{  (-1)^n \Big ( \overline{ \hat{M}} \Big )^n   }{2 n}  + \frac{(-1)^n (n-1) \Big ( \overline{\hat{M}}  \Big )^n}{4 \Lambda^2} \bigg ( \frac{\overline{\hat{M}^2}}{\overline{\hat{M}}^2}  -1  \bigg ) + o \Big ( \frac{1}{\Lambda^2} \Big )  \Bigg ] \int \frac{d^2p}{ (p^2 + \mu )^n}.
\eesn

Therefore,

\besn
t_0 & = & \frac{1}{2V} \tr \ln (- \partial^2 + \mu) + \frac{1}{2}  \int d^2p \ln \Big ( 1 + \frac{ \overline{\hat{M}}}{p^2+\mu}   \Big ) + O(1) \\
& = & \frac{1}{2 (2 \pi)^2} \int d^2 p \ln (p^2 + \mu + \overline{\hat{M}}) + O(1).
\eesn

Note that the correction is indeed of order 1 since the second term in the square brackets above when summed over $n$ gives an integrand of order 1. The latter, when integrated, gives a term of order $\Lambda^2$ which in turn cancels with the identical factor in the denominator. \newline

Before we compute the variance, let us compute $\int \mathcal{D} O  \int J_{ab} (K_{aa'} )^{-1} J_{a'b}$, since the relevant calculation is almost identical with the one done above to obtain $t_0$. In fact, we have

\besn
& & \int \mathcal{D} O  \int J_{ab} (K_{aa'} )^{-1} J_{a'b}  
= \int \int \mathcal{D} O J_{ab} \Big (- \partial^2 + \mu + O^t \hat{M} O  \Big  )^{-1} _{aa'}    J_{a'b}  \\
 & & =\int \int \mathcal{D} O J_{ab} ( - \partial^2 + \mu   )^{-1} \Big  ( I + (- \partial^2 + \mu)^{-1}O^t \hat{M} O    \Big)^{-1}_{aa'} J_{a'b} \\
 & & = \sum_{n=0}^\infty (-1)^n  \int \int \mathcal{D} O J_{ab} (- \partial^2 + \mu)^{-1}   \Big (   (- \partial^2 + \mu)^{-1}O^t \hat{M} O    \Big)_{aa'}^n     J_{a'b}.
\eesn

The calculation then proceeds by integrating over the $O$'s as shown in figure \ref{single}. Thus,

\ben
\int \mathcal{D} O  \int J_{ab} (K_{aa'} )^{-1} J_{a'b} = \int J_{ab} \Big (- \partial^2 + \mu + \overline{\hat{M}} \Big )^{-1} J_{ab}.
\een

Moving on to computing the variance of $d \nu_N[t]$, we need to compute 

\besn
A^{-1}  &  = & \int (t^2 - t_0^2 ) d \nu_N[t] = \int \mathcal{D} O t^2(O) - \Big ( \int \mathcal{D} O t(O) \Big )^2 \\
& = & \frac{1}{4 N^2 V^2}  \bigg ( \int \mathcal{D} O \Big ( \tr \ln \big (  \delta_{aa'} + (- \partial^2 + \mu)^{-1} O^t \hat{M}_{aa'}O) \big ) \Big )^2  \\
 & & - \Big ( \int \mathcal{D} O \tr \ln ( \delta_{aa'} + (- \partial^2 + \mu)^{-1} O^t \hat{M}_{aa'}O)   )      \Big )^2          \bigg )  \\
 & = & \frac{1}{4 N^2 V^2} \sum_{n,m} \frac{(-1)^{n+m}}{nm}  \bigg [ \int \mathcal{D} O \tr  \Big ( (- \partial^2 + \mu)^{-1} O^t \hat{M}O   \Big )^n   \\
 & & \times \tr  \Big ( (- \partial^2 + \mu)^{-1} O^t \hat{M}O   \Big )^m  -  \Big ( \int \mathcal{D} O \tr  \Big ( (- \partial^2 + \mu)^{-1} O^t \hat{M}O   \Big )^n   \\
 & &  \times \int \mathcal{D}O'  \tr \Big ( (- \partial^2 + \mu)^{-1} O'^t \hat{M}O'   \Big )^m  \bigg ].
\eesn

Consider the term in the square brackets

\besn
\int \mathcal{D} O \tr  \Big ( (- \partial^2 + \mu)^{-1} O^t \hat{M}O   \Big )^n  \tr  \Big ( (- \partial^2 + \mu)^{-1} O^t \hat{M}O   \Big )^m  - &  & \\
 -    \int \mathcal{D} O \tr  \Big ( (- \partial^2 + \mu)^{-1} O^t \hat{M}O   \Big )^n   \int \mathcal{D}O'  \tr \Big ( (- \partial^2 + \mu)^{-1} O'^t \hat{M}O'   \Big )^m  &   & \\
 = \sum_{x_1, \dots, x_n; y_1, \dots, y_m} (- \partial^2 + \mu)^{-1}_{x_1, x_2} \dots (- \partial^2 + \mu)^{-1}_{x_n, x_1}  (- \partial^2 + \mu)^{-1}_{y_1, y_2} & \dots & \\
 \dots  (- \partial^2 + \mu)^{-1}_{y_m, y_1} 
    \bigg [ \int \mathcal{D} O \txt{tr} \Big ( O^t(x_1) \hat{M} O(x_1) \dots O^t(x_n) \hat{M} O(x_n) \Big )  &  & \\
 \times \txt{tr} \Big ( O^t(y_1) \hat{M} O (y_1) \dots O^t(y_m) \hat{M} O(y_m) \Big )  &  & \\
  - \int \mathcal{D} O \, \txt{tr} \Big ( O^t(x_1) \hat{M} O( x_1) \dots O^t(x_n) \hat{M} O(x_n) \Big )  &  & \\ 
  \times \int \mathcal{D} O' \, \txt{tr} \Big ( O'^t(y_1) \hat{M} O'(y_1) \dots O'^t(y_m) \hat{M} O'(y_m) \Big ) \bigg ] .
\eesn

The $\mathcal{D}O$ and $\mathcal{D} O'$ integrals will depend, as before, on the fact whether there are any coincidences between the points at which the relevant $O$'s and $O'$'s are located. It is obvious that unless there are coincidences between the $x$'s and the $y$'s, the terms in the square brackets cancel each other. Therefore, we only need to consider the situation with coincidences between the $x$'s and the $y$'s. Of course, each extra coincidence ``costs'' a factor of $\frac{1}{\Lambda^2}$, and thus, we need only look at the minimal possible number of coincidences which give nonzero answers. It is easy to see that if there is a single $x$ coinciding with a single $y$, then the two terms still cancel out. The reason for this is that after doing  the integral over all the other $O$'s and $O'$'s, the dependence on the remaining orthogonal matrix will disappear (similarly to how it happened in the calculation in figure \ref{coincident}). We thus need to consider the case when there are precisely two pairs of coincident points. Performing the integrals over all the other points, we see that the term in brackets is calculated graphically as shown in figure \ref{variance}.

\begin{figure}[h]
\begin{tikzpicture}

\node at (2, 11) {$\Bigg ($};

\node at (2.5, 11) {$\int d$};
\draw [fill = white] (3,11) circle [radius  = 0.2];
\node at (3.4, 11) {$ d$};
\draw [fill = gray] (3.8,11) circle [radius  = 0.2];

\draw (4.3, 11.5) -- (4.8, 12) -- (5.3, 11.5) -- (4.3,11.5 );
\draw (4.5, 11.5) -- (4.5, 10.5) ;
\draw (5.1, 11.5) -- (5.1, 10.5);
\draw [fill = white] (4.5,11) circle [radius=0.2];
\draw [fill = white] (5.1,11) circle [radius=0.2];

\draw (5.5, 11.5) -- (6, 12) -- (6.5, 11.5) -- (5.5,11.5 );
\draw (5.7, 11.5) -- (5.7, 10.5) ;
\draw (6.3, 11.5) -- (6.3, 10.5);
\draw [fill = white] (5.7,11) circle [radius=0.2];
\draw [fill = white] (6.3,11) circle [radius=0.2];

\draw (6.7, 11.5) -- (7.2, 12) -- (7.7, 11.5) -- (6.7,11.5 );
\draw (6.9, 11.5) -- (6.9, 10.5) ;
\draw (7.5, 11.5) -- (7.5, 10.5);
\draw [fill = gray] (6.9,11) circle [radius=0.2];
\draw [fill = gray] (7.5,11) circle [radius=0.2];

\draw (7.9, 11.5) -- (8.4, 12) -- (8.9, 11.5) -- (7.9,11.5 );
\draw (8.1, 11.5) -- (8.1, 10.5) ;
\draw (8.7, 11.5) -- (8.7, 10.5);
\draw [fill = gray] (8.1,11) circle [radius=0.2];
\draw [fill = gray] (8.7,11) circle [radius=0.2];

\draw (6.3, 10.5) -- (6.9, 10.5);
\draw (5.7, 10.5) -- (5.7, 10.3) -- (7.5, 10.3) -- (7.5, 10.5);
\draw (5.1, 10.5) -- (5.1, 10.1) -- (8.1, 10.1) -- (8.1, 10.5);
\draw (4.5, 10.5) -- (4.5, 9.9) -- (8.7, 9.9) -- (8.7, 10.5);

\node at (11, 11) { $-   N^2  \big (\overline{\hat{M}} \big )^4 \Bigg ) \big ( \overline{\hat{M}} \big ) ^{ n+ m - 4}  $  };

\node at (3.5, 7.5) {$ = \frac{1}{N^4} \Bigg \{ $};

\node at (4.4, 7.5) {$2 \times$};

\draw (4.8, 8.5) -- (5.3, 9) -- (5.8, 8.5) -- (4.8,8.5 );
\draw (5, 8.5) -- (5, 8.2) ;
\draw (5.6, 8.5) -- (5.6, 8.2);

\draw (6, 8.5) -- (6.5, 9) -- (7, 8.5) -- (6,8.5 );
\draw (6.2, 8.5) -- (6.2, 8.2) ;
\draw (6.8, 8.5) -- (6.8, 8.2);

\draw (7.2, 8.5) -- (7.7, 9) -- (8.2, 8.5) -- (7.2,8.5 );
\draw (7.4, 8.5) -- (7.4, 8.2) ;
\draw (8, 8.5) -- (8, 8.2);

\draw (8.4, 8.5) -- (8.9, 9) -- (9.4, 8.5) -- (8.4,8.5 );
\draw (8.6, 8.5) -- (8.6, 8.2) ;
\draw (9.2, 8.5) -- (9.2, 8.2);

\draw (6.8, 7.5) -- (7.4, 7.5);
\draw (6.2, 7.5) -- (6.2, 7.3) -- (8, 7.3) -- (8, 7.5);
\draw (5.6, 7.5) -- (5.6, 7.1) -- (8.6, 7.1) -- (8.6, 7.5);
\draw (5, 7.5) -- (5, 6.9) -- (9.2, 6.9) -- (9.2, 7.5);

\draw (5.6, 8.2) -- (6.2, 8.2);
\draw (5, 8.2) -- (5, 8) -- (6.8, 8) -- (6.8, 8.2);
\draw (7.4, 8.2) -- (8, 8.2);
\draw (8.6, 8.2) -- (9.2, 8.2);

\draw (7.4, 7.8) -- (8, 7.8);
\draw (8.6, 7.8) -- (9.2, 7.8);

\draw (5.6, 7.5) -- (6.2, 7.5);
\draw (5, 7.5) --(5, 7.8) -- (6.8, 7.8) -- (6.8, 7.5);
\draw (7.4, 7.5) -- (7.4, 7.8);
\draw (8, 7.5) -- (8, 7.8);
\draw (8.6, 7.5) -- (8.6, 7.8);
\draw (9.2, 7.5) -- (9.2, 7.8);

\node at (12, 7.5) {$+$};

\node at (4.4, 4.5) {$2 \times$};

\draw (4.8, 5.5) -- (5.3, 6) -- (5.8, 5.5) -- (4.8,5.5 );
\draw (5, 5.5) -- (5, 5.2) ;
\draw (5.6, 5.5) -- (5.6, 5.2);

\draw (6, 5.5) -- (6.5, 6) -- (7, 5.5) -- (6,5.5 );
\draw (6.2, 5.5) -- (6.2, 5.2) ;
\draw (6.8, 5.5) -- (6.8, 5.2);

\draw (7.2, 5.5) -- (7.7, 6) -- (8.2, 5.5) -- (7.2,5.5 );
\draw (7.4, 5.5) -- (7.4, 5.2) ;
\draw (8, 5.5) -- (8, 5.2);

\draw (8.4, 5.5) -- (8.9, 6) -- (9.4, 5.5) -- (8.4,5.5 );
\draw (8.6, 5.5) -- (8.6, 5.2) ;
\draw (9.2, 5.5) -- (9.2, 5.2);

\draw (6.8, 4.5) -- (7.4, 4.5);
\draw (6.2, 4.5) -- (6.2, 4.3) -- (8, 4.3) -- (8, 4.5);
\draw (5.6, 4.5) -- (5.6, 4.1) -- (8.6, 4.1) -- (8.6, 4.5);
\draw (5, 4.5) -- (5, 3.9) -- (9.2, 3.9) -- (9.2, 4.5);

\draw (5, 8.2) -- (5, 8) -- (6.8, 8) -- (6.8, 8.2);
\draw (7.4, 5.2) -- (8, 5.2);
\draw (8.6, 5.2) -- (9.2, 5.2);

\draw (7.4, 4.8) -- (8, 4.8);
\draw (8.6, 4.8) -- (9.2, 4.8);

\draw (5.6, 7.5) -- (6.2, 7.5);
\draw (5, 7.5) --(5, 7.8) -- (6.8, 7.8) -- (6.8, 7.5);
\draw (7.4, 4.5) -- (7.4, 4.8);
\draw (8, 4.5) -- (8, 4.8);
\draw (8.6, 4.5) -- (8.6, 4.8);
\draw (9.2, 4.5) -- (9.2, 4.8);

\draw (5.6, 5.2) -- (5.6, 5) -- (6.8, 5) -- (6.8, 5.2);
\draw (5, 5.2) -- (5.5, 5.2);
\draw (5.7, 5.2) -- (6.2, 5.2);
\draw (5.6, 4.5) --(5.6, 4.8) -- (6.8, 4.8) -- (6.8, 4.5);
\draw (5, 4.5) -- (5.5, 4.5);
\draw (5.7, 4.5) -- (6.2, 4.5);

\node at (12, 4.5) {$+$};

\draw (4.8, 2.5) -- (5.3, 3) -- (5.8, 2.5) -- (4.8,2.5 );
\draw (5, 2.5) -- (5, 2.2) ;
\draw (5.6, 2.5) -- (5.6, 2.2);

\draw (6, 2.5) -- (6.5, 3) -- (7, 2.5) -- (6,2.5 );
\draw (6.2, 2.5) -- (6.2, 2.2) ;
\draw (6.8, 2.5) -- (6.8, 2.2);

\draw (7.2, 2.5) -- (7.7, 3) -- (8.2, 2.5) -- (7.2,2.5 );
\draw (7.4, 2.5) -- (7.4, 2.2) ;
\draw (8, 2.5) -- (8, 2.2);

\draw (8.4, 2.5) -- (8.9, 3) -- (9.4, 2.5) -- (8.4,2.5 );
\draw (8.6, 2.5) -- (8.6, 2.2) ;
\draw (9.2, 2.5) -- (9.2, 2.2);

\draw (6.8, 1.5) -- (7.4, 1.5);
\draw (6.2, 1.5) -- (6.2, 1.3) -- (8, 1.3) -- (8, 1.5);
\draw (5.6, 1.5) -- (5.6, 1.1) -- (8.6, 1.1) -- (8.6, 1.5);
\draw (5, 1.5) -- (5, 0.9) -- (9.2, 0.9) -- (9.2, 1.5);

\draw (5.6, 2.2) -- (6.2, 2.2);
\draw (5, 2.2) -- (5, 2) -- (6.8, 2) -- (6.8, 2.2);
\draw (8, 2.2) -- (8.6, 2.2);
\draw (7.4, 2.2) -- (7.4, 2) -- (9.2, 2) -- (9.2, 2.2);

\draw (5.6, 1.5) -- (6.2, 1.5);
\draw (5, 1.5) --(5, 1.8) -- (6.8, 1.8) -- (6.8, 1.5);

\draw (8, 1.5) -- (8.6, 1.5);
\draw (7.4, 1.5) --(7.4, 1.8) -- (9.2, 1.8) -- (9.2, 1.5);

\node at (11.5, 1.5) {$ \quad +  \quad o (N^4) \Bigg \} \big ( \overline{\hat{M}} \big ) ^{ n+ m - 4}$};

\node at (7, 0) {$=  \Big (  4  \big ( \overline{\hat{M}} \big )^2 \, \overline{\hat{M}^2}  \quad + \quad  \big ( \overline{\hat{M}^2}  \big )^2 \quad  + \quad  o(1)  \Big ) \big ( \overline{\hat{M}} \big ) ^{ n+ m - 4} $};

\end{tikzpicture}

\caption{\textit{ The starting expression above is that obtained after integration over all the $O$'s at non-coincident points. The $o(N^4)$ terms in the curly braces come from the remaining two ways of reconnecting the edges.   }}
\label{variance}

\end{figure}

Thus, we have that, up to subdominant terms,

\besn
A^{-1} = \frac{1}{4 N^2 V^2} \sum_{n,m} \frac{  (-1)^{n+m} } {nm}  \Big (  4  \big ( \overline{\hat{M}} \big )^2 \, \overline{\hat{M}^2}  \quad + \quad  \big ( \overline{\hat{M}^2}  \big )^2 \quad   \Big ) \big ( \overline{\hat{M}} \big ) ^{ n+ m - 4}    &  & \\  
 \times \sum_{x_1, \dots, x_n; y_1, \dots, y_m; x_i = y_{i'}, x_j = y_{j'}} (- \partial^2 + \mu)^{-1}_{x_1, x_2} 
\dots (- \partial^2 + \mu)^{-1}_{x_n, x_1} \times & \dots & \\
 \dots \times (- \partial^2 + \mu)^{-1}_{y_1, y_2} \dots (- \partial^2 + \mu)^{-1}_{y_m, y_1} &  & \\
= \frac{  4 \frac{  \overline{\hat{M}^2}    }{  ( \overline{\hat{M}}  ) ^2   } + \frac{ ( \overline{\hat{M}^2}   )^2    }{  ( \overline{\hat{M}}  ) ^4   }      }{4 N^2 V^2 \Lambda^ 8} \sum_{n, m} (-1)^{n+m}  ( \overline{\hat{M}}  ) ^{ n+ m }     \sum_{x,y} \sum_{\alpha = 1}^{n-1} \sum_{\beta = 1}^{m-1} \sum_{p, p', q, q'} e^{i (p + p' + q + q') (x-y)}  &  & \\
 \times \frac{1}{(p^2 + \mu)^\alpha}\frac{1}{( p'^2 + \mu)^{n- \alpha}} \frac{1}{(q^2 + \mu)^\beta} \frac{1}{(q'^2 + \mu)^{m - \beta}} &  &
\eesn
\besn
 = \frac{  4 \frac{  \overline{\hat{M}^2}    }{  ( \overline{\hat{M}}  ) ^2   } + \frac{ ( \overline{\hat{M}^2}   )^2    }{  ( \overline{\hat{M}}  ) ^4   }      }{4 N^2 V^2 \Lambda^ 8} \sum_{n, m} (-1)^{n+m}  ( \overline{\hat{M}}  ) ^{ n+ m }     \sum_{x,y}  \sum_{p, p', q, q'} e^{i (p + p' + q + q') (x-y)}  &  & \\
 \times \bigg ( \frac{1}{p^2 + \mu}\frac{1}{p'^2 + \mu} \frac{( \frac{1}{p^2 + \mu})^{n-1} - ( \frac{1}{p'^2 + \mu}) ^{n-1}}{\frac{1}{p^2 + \mu} - \frac{1}{p'^2 + \mu}}   \bigg ) \\
 \times  \bigg ( \frac{1}{q^2 + \mu}\frac{1}{q'^2 + \mu} \frac{( \frac{1}{q^2 + \mu})^{n-1} - ( \frac{1}{q'^2 + \mu}) ^{n-1}}{\frac{1}{q^2 + \mu} - \frac{1}{q'^2 + \mu}}  \bigg ) &  & \\
 = \frac{  4  \overline{\hat{M}^2} ( \overline{\hat{M}}  ) ^4    +  ( \overline{\hat{M}^2}   )^2   }{4 N^2 V^2 \Lambda^ 8}    \Bigg [ \sum_{x,y}  \sum_{p, p', q, q'} e^{i (p + p' + q + q') (x-y)}  &  & \\
 \times \bigg ( \frac{1}{p^2 + \mu}\frac{1}{p'^2 + \mu}  \frac{1}{q^2 + \mu}\frac{1}{q'^2 + \mu}     \bigg ) \bigg \{ \frac{p^2 + \mu}{p^2 + \mu + \overline{\hat{M}}} -  \frac{p'^2 + \mu}{p'^2 + \mu + \overline{\hat{M}}}  \bigg \}  &  &  \\
  \times \bigg \{ \frac{q^2 + \mu}{q^2 + \mu + \overline{\hat{M}}} -  \frac{q'^2 + \mu}{q'^2 + \mu + \overline{\hat{M}}}  \bigg \} \Bigg ].
\eesn

We now obtain the continuum limit asymptotics of the term in the square brackets above. In this limit, note that the terms in the curly brackets are asymptotic to 1. We thus consider

\besn
\sum_{x,y}  \sum_{p, p', q, q'} e^{i (p + p' + q + q') (x-y)} 
 \bigg ( \frac{1}{p^2 + \mu}\frac{1}{p'^2 + \mu}  \frac{1}{q^2 + \mu}\frac{1}{q'^2 + \mu}     \bigg )   & & \\
 \sim \frac{V^4}{( 2 \pi)^8   } \sum_{x,y} \bigg ( \int \frac{e^{i p(x-y)}   }{p^2 + \mu} dp \bigg )^4  \sim  \frac{ \Lambda^4 V^2}{ (2 \pi)^8} \int dx dy \bigg ( \int \frac{e^{i p(x-y)}   }{p^2 + \mu} dp \bigg )^4  \simeq \Lambda^4 V^3,   & & 
\eesn

where $\simeq$ stands for ``asymptotic up to a constant''. Putting everything together, we see that 

\ben
A^{-1} \simeq \frac{1}{N^2} \Big ( 4  \overline{\hat{M}^2} ( \overline{\hat{M}}  ) ^4    +  ( \overline{\hat{M}^2}   )^2 \Big ) \frac{V}{\Lambda^4}. 
\een

\vspace{0.5cm}

\subsection*{Large $N$ Asymptotics}

We are now ready to use everything we've learned and plug it in (\ref{eqnarray:asymp}). Then, we get 

\bes
\label{eqnarray:asympto}
Z[J] \sim \int \mathcal{D} \rho dt \exp \bigg [ - \frac{1}{2} \int J_{ab} \Big (- \partial^2 + \mu + \overline{\hat{M}} \Big )^{-1} J_{ab}    \bigg ]   & & \nonumber \\
 \times   \Bigg ( \exp \bigg [  N^2 \bigg ( \Lambda^2 \int d\hat{M} d\hat{M}'  \rho(\hat{M}) \rho(\hat{M}') \ln | \hat{M} - \hat{M}' | + \frac{V \mu}{2 \lambda}   +  \frac{\overline{\hat{M}}}{2 \lambda}  \bigg )   \bigg ] \Bigg )  \nonumber & & \\
  \times \Bigg (  e^{ - N^2 ( V t + \frac{A}{2} (t-t_0)^2    )   }   \Bigg ) . 
\ees

This expression is now in the appropriate form for the large $N$ limit analysis using the usual Laplace's method. Before we do so, let us briefly sketch an alternative way, alluded to after the lemma above, to arrive at the expression we have. \newline

Instead of the argument given above, using the Lipschitzness of the $J$ term which allowed us to take the aforementioned term outside the $\mathcal{D}O$ integral, we could have proceeded as above, pushing forward the measure directly, with the change that 

\ben
\tilde{t}(O) = \frac{1}{2NV} \tr \Big ( \ln (K_{aa'})  \Big ) + \frac{1}{2 N^2} J_{ab}(K_{aa'})^{-1} J_{a'b}.
\een

In other words, we could have incorporated the $J$ term into the pushforward measure $d \nu_N[t]$. This would have shifted $t_0$ obtained above by $\frac{\delta}{N^2}$ where, up to subdominant terms,

\ben
\delta = \frac{1}{2N^2} \int J_{ab} \Big (- \partial^2 + \mu + \overline{\hat{M}} \Big )^{-1} J_{ab}.  
\een

In this case, shifting in (\ref{eqnarray:asympto}) the integral over $t$ by $\delta$, we would have reobtained the expression above. \newline  

Going back to the asymptotic analysis, we have three variables, $t$, $\mu$ and $\rho$. Proceeding with the standard Laplace's method we set the variation of the exponent with respect to the first two to zero gives the following set of equations

\besn
V + A (t-t_0) & = &  0 \\
- \frac{V}{2 \lambda} + \frac{A'}{2} (t-t_0)^2 - A (t-t_0) t'_0 & = & 0.
\eesn

Of course, strictly speaking, we still need to vary with respect to $\rho$. However, this will turn out to be unnecessary for our goals as we'll see below. \newline

Therefore, obtaining $(t-t_0)$ from the first equation and plugging it into the second, we get  

\be
\label{eq:primezero}
-\frac{V}{2 \lambda} + \frac{ V^2 A'  }{2 A^2   } + t'_0  =  0 \implies - \frac{V}{2 \lambda} - V^2 (A^{-1})'  + V t'_0 = 0.
\ee

Next, let us compute 

\ben
t_0' = \frac{1}{4 \pi} \int_0^{\frac{\tilde{\Lambda}}{2}} \frac{|p|}{|p|^2 + \mu + \overline{\hat{M}} }  d|p| \sim \frac{1}{8 \pi}  \ln \bigg  ( \frac{\tilde{\Lambda}^2}{  \mu + \overline{\hat{M}}} \bigg ).
\een

Now, let us compare the three terms in (\ref{eq:primezero}). Note that the first and third terms are of the same order in $\Lambda$, while the second is greatly subdominant (the culprit being the $\frac{1}{\Lambda^4}$ in $A^{-1}$). We shall drop this term from the equation\footnote{Of course, we can't simply ignore a term in an equation. The rigorous way to proceed would be to solve the equation with this term present, and then verify that the result matches asymptotically with the one obtained by simply dropping this term.} obtaining

\ben
\frac{1}{8 \pi}  \ln \bigg  ( \frac{\tilde{\Lambda}^2}{  \mu + \overline{\hat{M}}} \bigg ) = \frac{1}{2 \lambda} \implies \mu + \overline{\hat{M}} = \Lambda^2  e^{- \frac{4 \pi }{ \lambda}} = \mu_0.
\een

Plugging this back into $Z[J]$, we see that 

\ben
Z[J] \simeq \exp \bigg [ - \frac{1}{2} \int J_{ab} \Big ( - \partial ^2 + \mu_0   \Big )^{-1}   J_{ab}  \bigg ].
\een

We have thus obtained the claimed fact that the principal chiral model in the large $N$ limit is a an infinite collection (since the $a$ and $b$ indices have an infinite range) of free theories of massive particles, all with mass $\mu_0$. \newline

\textbf{Acknowledgments:} The author would like to thank J. Merhej for reading a preliminary version of this paper and for the numerous comments which have greatly improved its readability.

\texttt{{\footnotesize Department of Mathematics, American University of Beirut, Beirut, Lebanon.}
}\\ \texttt{\footnotesize{Email address}} : \textbf{\footnotesize{tamer.tlas@aub.edu.lb}}

\end{document}